\let\oldmarginpar\marginpar
\renewcommand\marginpar[1]{\-\oldmarginpar[\raggedleft\footnotesize #1]{\raggedright\footnotesize #1}}
\newcommand{\etal}{\emph{et al.}}
\newcommand{\ie}{\emph{i.e.}}
\newcommand{\eg}{\emph{e.g.}}
\newcommand{\define}[1]{\emph{#1}}
\newcommand{\resp}{resp.}
\newcommand{\N}{\ensuremath{\mathbb{N}}}
\newcommand{\ppO}{\ensuremath{\mathcal{O}}}
\newcommand{\NP}{\ensuremath{\mathsf{NP}}}
\newcommand{\FPT}{\ensuremath{\mathsf{FPT}}}
\newcommand{\XP}{\ensuremath{\mathsf{XP}}}
\newcommand{\W}[1]{\ensuremath{\mathsf{W}[\mathsf{#1}]}}
\newcommand{\rfpt}{\emph{fpt}}
\newcommand{\DS}{\textsc{Dominating set}}
\newcommand{\si}{\ensuremath{\sigma}}
\newcommand{\rh}{\ensuremath{\rho}}
\newcommand{\SR}{\ensuremath{[\si,\rh]}}
\newcommand{\SRDS}{\SR-\DS}
\newcommand{\pexists}{\ensuremath{\exists}}
\newcommand{\pmin}{\ensuremath{\min}}
\newcommand{\pmax}{\ensuremath{\max}}
\newcommand{\eSRDS}{\pexists\SR-\DS}
\newcommand{\eSRDSps}{\pexists\SR-\textsc{Dominating set with preselected vertices}}
\newcommand{\kSRDS}{$k$-\SRDS}
\newcommand{\kCDS}{$k$-\textsc{Capacitated dominating set}}
\newcommand{\CMSO}{\textrm{CMSO}}
\newcommand{\tw}{\ensuremath{\mathit{tw}}}
\newcommand{\poly}{\ensuremath{\mathrm{poly}}}
\newcommand{\gadgetname}[1]{\emph{#1}}
\newcommand{\gadget}[2]{\ensuremath{\mathcal{#1}_{#2}}}
\newcommand{\capacity}{\ensuremath{\mathrm{cap}}}
\newcommand{\dom}{\ensuremath{\mathrm{dom}}}
\newcommand{\Gap}{\ensuremath{\mathsf{\Gamma}}}
\newcommand{\gapz}{\ensuremath{\Gap_0}}
\newcommand{\gapm}{\ensuremath{\Gap_-}}
\newcommand{\gapp}{\ensuremath{\Gap_+}}
\newcommand{\ppdeg}{\ensuremath{\mathrm{deg}}}
\newcommand{\problemname}[1]{\textsc{#1}}
\newcommand{\problem}[3]{

  \medskip
  \noindent
  \problemname{#1}\\
  \textit{Input:} #2\\
  \textit{Question:} #3
  \medskip
}
\newcommand{\paramproblem}[4]{
  \medskip
  \noindent
  \problemname{#1}\\
  \textit{Input:} #2\\
  \textit{Parameter:} \ensuremath{#3}\\
  \textit{Question:} #4
  \medskip
}
\newtheorem{theorem}{Theorem}
\newtheorem*{theoremnn}{Theorem}
\newtheorem{lemma}{Lemma}
\newtheorem{corollary}{Corollary}
\title{W[1]-hardness of some domination-like problems parameterized by tree-width\tnoteref{t1}}
\author[ulb]{Mathieu Chapelle}
\address[ulb]{Université Libre de Bruxelles, CP~212, B-1050 Brussels, Belgium}
\ead{mathieu.chapelle@univ-orleans.fr}
\begin{document}

\begin{abstract}
The concept of generalized domination unifies well-known variants of domination-like and independence problems, such as \textsc{Dominating set}, \textsc{Independent set}, \textsc{Perfect code}, \emph{etc}.
A generalized domination (also called \textsc{$[\sigma,\rho]$-Dominating Set}) problem consists in finding a subset of vertices in a graph such that every vertex is satisfied with respect to two given sets of constraints $\sigma$ and $\rho$.
Very few problems are known not to be $\mathsf{FPT}$ when parameterized by tree-width, as usually this restriction allows one to write efficient algorithms to solve the considered problems.
The main result of this article is a proof that for some (infinitely many) sets $\sigma$ and $\rho$, the problem \textsc{$\exists[\sigma,\rho]$-Dominating Set} is $\mathsf{W}[1]$-hard when parameterized by the tree-width of the input graph.
This contrasts with the current knowledge on the parameterized complexity of this problem when parameterized by tree-width, which had only been studied for finite and cofinite sets $\sigma$ and $\rho$ and for which it has been shown to be $\mathsf{FPT}$.
\end{abstract}


\begin{keyword}
parameterized complexity \sep domination-like problem \sep tree-width \sep graph algorithm
\end{keyword}

\maketitle



\section{Introduction}

\subsection{Motivation} 

Parameterized complexity is a recent theory introduced in the late 90's by Downey and Fellows (see \eg~\cite{DoFe99,FlGr06} for surveys).
This theory underlines the connection between a parameter (different from the usual size of the input) and the complexity of a given problem, and allows one to better study its complexity.
Many different problem-specific parameters can be considered, such as the maximum size of a desired solution, or the tree-width of the input graph.
A problem is said to be \FPT\ (fixed-parameter tractable) parameterized by a parameter $k$ if it can be solved in $\ppO(f(k) \cdot p(n))$ time, for some computable function $f$ and a polynomial $p$, where $n$ is the size of the input.
Parameterized intractable problems are at least \W{1}-hard, where \W{1} is one of the most important classes of parameterized complexity and believed to be strictly including the class \FPT\ (see \eg~\cite{DoFe99,FlGr06}).

In this article, we study the parameterized complexity of \emph{generalized domination} problems, also known as \eSRDS\ and introduced by Telle~\cite{Te94,Te94PhD}, when parameterized by the tree-width of the input graph.
Let $\si,\rh$ be two fixed subsets of $\N$ (throughout this paper, $\N$ denotes the set of nonnegative integers while $\N^*$ denotes the set of positive integers).
The problem is defined as follows:

\problem{\eSRDS}{A graph $G = (V,E)$.}{Is there a subset $D \subseteq V$ such that for every $v \in D$, $|N(v) \cap D| \in \si$, and for every $v \notin D$, $|N(v) \cap D| \in \rh$? If so, $D$ is called a \SR-dominating set.}

Notice that the sets $\si$ and $\rh$ are part of the definition of the problem, hence we can suppose that they are given as oracles deciding the membership of an integer to these sets.

It is well known that usual optimization problems such as \problemname{Minimum dominating set} (minimum \SR-dominating set with $\si = \N$ and $\rh = \N^*$) or \problemname{Maximum independent set} (maximum \SR-dominating set with $\si = \{0\}$ and $\rh = \N$) are \NP-hard.
When dealing with generalized domination, in many cases the problem of finding any \SR-dominating set is already \NP-hard.
Thus one usually considers first the problem of deciding the existence of such set in a given graph, and if relevant, the optimization problems \pmin-\SRDS\ and \pmax-\SRDS\ asking for an \SR-dominating set of minimum or maximum size respectively.
In this paper, we mainly consider existence problems unless otherwise stated.

Many well-known \NP-hard problems become efficiently tractable when restricted to graphs of bounded tree-width, and from the parameterized complexity point of view lots of them have turned out to be \FPT\ when parameterized by the tree-width of the input graph.
The decomposition into a tree-like structure of the input graph allows one to write algorithms which efficiently solve the considered hard problems.
A natural question is whether this tree-like structure can be used to solve every \SRDS\ problems in \FPT\ time parameterized by tree-width.

In this article, we show that for (infinitely) many cases of \si\ and \rh, the problem \eSRDS\ is \W{1}-hard even when restricted to graphs of bounded tree-width, giving a more accurate picture of the parameterized complexity of this problem when parameterized by tree-width.
We also prove (in Section~\ref{sec:general_complexity}) that this problem can be solved in $\ppO(n^{f(\tw)})$ time whenever \si\ and \rh\ are some \emph{reasonable} sets, hence justifying that this problem ought to be studied from the parameterized viewpoint.
Finally, note that by a result from Courcelle \etal~\cite{CoMaRo01}, it can be proven that \eSRDS\ is \FPT\ when parameterized by tree-width when \si\ and \rh\ are both ultimately periodic sets, and that an efficient algorithm can be obtained using automata and dynamic programming (see Section~\ref{sec:general_complexity}).

\subsection{Related work}

The \eSRDS\ problem, also known as locally checkable vertex subset and vertex partitioning problems, has been extensively studied since its introduction by Telle~\cite{Te94,Te94PhD} (see also \eg~\cite{BuTeVa13,CaPe12,GoVi08,RaSaSr08,RoBoRo09,TePr97,HeTe98}).

Several papers have considered the computational complexity of \eSRDS\ for some cases of $\si$ and $\rh$, on general graphs (see \eg~\cite{Te94,Te94PhD}), and on some classes of graphs such as bounded tree-width graphs~\cite{Te94,Te94PhD,TePr97}, bounded boolean-width graphs~\cite{BuTeVa13}, or chordal graphs~\cite{GoKr07}.

From the parameterized complexity point of view, it is well known that most of the usual existence domination-like problems are \W{1}-complete or \W{2}-complete on general graphs (see \eg~\cite{DoFe95-1,DoFe95-2}).
In an attempt to unify these results on \kSRDS\ parameterized by the maximum size $k$ of a \SR-dominating set, Cattanéo and Perdrix~\cite{CaPe12} have shown that it is in \W{2} for any recursive sets \si\ and \rh, while Golovach \etal~\cite{GoKrSu10} have shown that it is \W{1}-complete when \si\ and \rh\ are both finite sets.
Moreover, it is known that \eSRDS\ is \FPT, parameterized by tree-width, when \si\ and \rh\ are finite or cofinite sets (see \eg~\cite{TePr97,RoBoRo09}).

One may wonder whether every problem solvable in $\ppO\big(n^{\poly(\tw)}\big)$ parameterized by tree-width (\ie, in \XP, see \eg\ \cite{DoFe99,FlGr06}) is also solvable in \FPT\ time for the same parameter.
The answer is no, and some (few) parameterized problems are known to be \W{1}-hard when parameterized by tree-width (see \eg~\cite{DoLoSaVi08,FeFoLoRoSaTh07}).

\subsection{Our result}

We show that for (infinitely) many cases of $\si$ and $\rh$, the problem \eSRDS\ is \W{1}-hard when parameterized by the tree-width of the input graph.
For this purpose, we focus mainly on \si, and prove the following:

\begin{theorem}\label{thm:main}
Let $\si \subseteq \N$ be a set with arbitrarily large gaps between two consecutive elements, such that a gap of length at least $t$ is at distance $\poly(t)$ in \si (see Section~\ref{subsec:functions_on_sigma}), and let $\rh \subseteq \N$ be cofinite.
Suppose $\min \si \geq 1$ and $\min \rh \geq 2$.
Then the problem \eSRDS\ is \W{1}-hard when parameterized by the tree-width of the input graph.
\end{theorem}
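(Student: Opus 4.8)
The plan is to give a parameterized reduction from $k$-\problemname{Capacitated Dominating Set}, which Lokshtanov \etal\ proved \W{1}-hard when parameterized by $k$ plus the tree-width of the input graph. Recall that such an instance is a triple $(G,c,k)$ with $c\colon V(G)\to\N$, asking for $D\subseteq V(G)$ with $|D|\le k$ and a map $f\colon V(G)\setminus D\to D$ such that $f(w)\in N(w)$ for every $w$ and $|f^{-1}(u)|\le c(u)$ for every $u\in D$. From $(G,c,k)$ we build in polynomial time a graph $G'$ whose tree-width is bounded by a function of $k+\tw(G)$ and which has a \SR-dominating set if and only if $(G,c,k)$ is positive; since \eSRDS\ carries no further parameter, this is exactly the reduction we need.

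The skeleton of $G'$ is a subdivision of $G$: a \emph{selector} $a_v$ for each $v\in V(G)$ (``$v\in D$'' will mean ``$a_v$ lies in the solution $D'$''), and a \emph{token} $t_{u,w}$ for each ordered pair $(u,w)$ with $uw\in E(G)$ (``$u$ dominates $w$''), adjacent to $a_u$ and to $a_w$; since subdivisions preserve tree-width this part contributes only $O(\tw(G))$. Onto this skeleton we graft local gadgets, each with bounded interface. Standard forcing gadgets, built from a sufficiently long gap of \si\ together with the cofiniteness of \rh, pin chosen auxiliary vertices into or out of $D'$. An implication gadget at each token enforces $t_{u,w}\in D'\Rightarrow a_u\in D'$ (a vertex may dominate only if selected): fixing a maximal gap $[a,b]$ of \si, make $t_{u,w}$ adjacent to $a$ forced vertices and to $b-a$ freely choosable ones, so that $|N(t_{u,w})\cap D'|$ ranges over $[a,b]$ (which misses \si) when $a_u\notin D'$, and over $[a+1,b+1]\ni b+1\in\si$ when $a_u\in D'$. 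A serialized OR-gadget at each $w$ — a path of constant-size pieces running over the tokens incident to $w$ and over $a_w$ — forces that $a_w\in D'$ or some incident token lies in $D'$; that is, every non-selected vertex is dominated. Finally, a budget gadget realized as an $O(k)$-sized counter propagated along a fixed tree-decomposition of $G$ forces $|D'\cap\{a_v:v\in V(G)\}|\le k$, at the cost of an additional $O(k)$ in tree-width.

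The heart of the construction, and the only place where the hypothesis on \si\ is really used, is the \emph{capacity gadget}. Fix $u$ and put $d=\deg_G(u)$, $c=c(u)$ (we may assume $c\le d$). Because \si\ has arbitrarily large gaps and a gap of length $t$ sits at distance $\poly(t)$ from $0$, we can compute in polynomial time an element $p\in\si$, of size polynomial in $d$, with $p\ge c$ and $\{p+1,\dots,p+d\}\cap\si=\emptyset$. We add a forced vertex $z_u\in D'$, adjacent to a forced set of $p-c$ vertices, to the $d$ tokens $t_{u,w}$ with $w\in N_G(u)$, and to $c$ freely choosable ``slack'' vertices, say $s$ of which lie in $D'$. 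Then
\[
  |N(z_u)\cap D'| \;=\; (p-c)+\bigl|\{\,w:\,t_{u,w}\in D'\,\}\bigr|+s \;\in\;[\,p-c,\;p+d\,].
\]
If at most $c$ of the tokens $t_{u,w}$ lie in $D'$, then $s$ can be chosen so that this value equals $p\in\si$; if more than $c$ do, the value lies in $\{p+1,\dots,p+d\}$, which misses \si, so the \si-constraint of $z_u$ fails. Hence $z_u$ is satisfiable precisely when $|f^{-1}(u)|\le c(u)$; that is, the capacity constraint at $u$ holds. The \rh-side of the non-selected auxiliary vertices (unused slack, and the companion vertices of the other gadgets) is dealt with uniformly: since \rh\ is cofinite, it is enough to pad each such vertex with enough forced neighbours to push its solution-degree into the cofinite tail of \rh.

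It remains to check the two directions of the equivalence and the tree-width bound. From a capacitated dominating set $(D,f)$ of $G$ one sets $a_v\in D'\Leftrightarrow v\in D$ and $t_{u,w}\in D'\Leftrightarrow f(w)=u$, completes all gadgets consistently, and verifies every constraint by construction. Conversely, from a \SR-dominating set $D'$ of $G'$ one takes $D=\{v:a_v\in D'\}$ and, for each $w\notin D$, lets $f(w)$ be a neighbour $u$ with $t_{u,w}\in D'$ — such a $u$ exists by the OR-gadget at $w$ and is selected by the implication gadget on $t_{u,w}$ — while the capacity gadgets bound $|f^{-1}(u)|$ and the budget gadget bounds $|D|$. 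For the tree-width, start from a width-$\tw(G)$ decomposition of $G$, replace each occurrence of $v$ by $a_v$ together with the core of its capacity gadget, route the tokens as in the subdivision, attach the remaining constant-interface gadgets into the corresponding bags, and add the $O(k)$ budget counter; the total width is $O(\tw(G)+k)$, a function of the source parameter. The hard part will be the capacity gadget itself — turning the arithmetic inequality ``at most $c$ solution-neighbours'' into a single membership test in \si, which is exactly what a large gap at a polynomially bounded position buys — together with the bookkeeping that keeps the OR-, forcing- and budget-gadgets of bounded interface so that the tree-width stays bounded by a function of $k+\tw(G)$.
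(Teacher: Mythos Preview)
Your plan is essentially the paper's: reduce from \problemname{$k$-Capacitated Dominating Set}, work over (a variant of) the incidence graph, and enforce each capacity $c(u)$ by attaching to a forced vertex exactly $p-c$ forced and $c$ choosable neighbours, where $p\in\si$ is followed by a gap of length at least $\deg_G(u)$ found at polynomial distance. That is precisely the paper's \gadgetname{capacity} gadget, and it is the crux of the reduction. The paper also uses forcing, satisfiability, domination and edge-selection gadgets playing the roles of your forcing, OR and implication gadgets; these differ only in packaging (the paper hangs the capacity constraints directly on the selector $v$, rather than on a separate $z_u$, and uses unordered edge-vertices rather than directed tokens).

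The one substantive divergence is your \emph{budget} gadget. The paper does not thread a counter through a tree-decomposition; instead it introduces a single forced ``central'' vertex $c$ adjacent to \emph{every} selector, pads it with $\gapp(|V(G)|,k)-k$ forced and $k$ choosable neighbours, and uses a gap of length $|V(G)|$ in \si\ (again found at polynomial distance, which is the second place the $\poly(t)$ hypothesis is used) to cap the number of selected original-vertices at $k$. Adding $c$ to every bag costs only $+1$ in tree-width, so the paper gets $\tw(H)\le 4\,\tw(G)+\mathrm{const}(\si,\rh)$. Your alternative---an $O(k)$-state counter propagated along the decomposition---is asserted but not built, and building one using only \SR-constraints (with the given hypotheses on \si,\rh) is non-trivial; at minimum you would need increment and comparison primitives, and to handle branching nodes of the decomposition. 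This is the real gap in your proposal. A smaller issue: in your implication gadget you count the selected neighbours of $t_{u,w}$ as lying in $[a,b]$ or $[a+1,b+1]$ depending on $a_u$, but $t_{u,w}$ is also adjacent to $a_w$ and to the forced $z_u$, so the arithmetic needs shifting; this is easily repaired by absorbing those contributions into the padding constants.
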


Throughout this paper, the more-intuitive term \define{gap} will designate an excluded interval of consecutives integers in a subset of (non-negative) integers.

Many natural well-known infinite subsets of integers verify the condition on \si\ given above, \eg\ the positive powers of $\alpha \geq 2$ (\eg\ for $\alpha = 3$, the set $\{3,9,27,81,\ldots\}$), or the Fibonacci numbers ($\{1,2,3,5,8,13,\ldots\}$).
On the other hand, this result doesn't apply for infinite sets with bounded gaps, \eg\ the ultimately periodic sets (for which the problem is in fact \FPT, see Section~\ref{sec:general_complexity}), or the set of all positive integers excepted the multiple of $\alpha \geq 2$ (for \eg\ $\alpha = 3$, the set $\{1,2,4,5,7,8,10,\ldots\}$).

\bigskip

This article is organized as follows.
In Section~\ref{sec:preliminaries}, we recall some notions and definitions, and we give in Section~\ref{sec:general_complexity} some general results on the parameterized complexity of \eSRDS.
The proof of our main Theorem~\ref{thm:main} is then splitted in two steps.
Firstly in Section~\ref{sec:w1-hardness_first-step}, we reduce from \kCDS, one of the few problems known to be \W{1}-hard~\cite{DoLoSaVi08} when parameterized by the tree-width of the input graph (and the maximum size of a solution), to a variant of \eSRDS\ in which some vertices are inconditionally included in the \SR-dominating set.
Secondly in Section~\ref{sec:w1-hardness_second-step}, we reduce this variant to the \eSRDS\ problem, hence proving this latter to be \W{1}-hard when parameterized by tree-width.
We conclude this article with some open questions in Section~\ref{sec:conclusion}.

\section{Preliminaries}\label{sec:preliminaries}

We briefly recall in this section some notions and definitions used throughout this article.

\subsection{Graphs}

We consider in this article finite undirected graphs, without loops nor multiple edges.
Let $G = (V,E)$ be an $n$-vertex $m$-edge graph.
$V(G)$ (or simply $V$ if it is clear from the context) denotes the \define{set of vertices} of the graph $G$, while $E(G)$ (or simply $E$) denotes the \define{set of edges}.
For two vertices $x,y \in V$, we denote an edge between $x$ and $y$ by $xy$.
For a vertex $v \in V$, $N(v) = \{u \mid uv \in E\} $ denotes the \define{open neighborhood} of $v$, while $N[v] = N(v) \cup \{v\}$ denotes its \define{closed neighborhood}.
For a subset $S \subseteq V$, $N[S] = \bigcup_{v \in S} N[v]$ denotes the closed neighborhood of $S$.

The \define{incidence graph} $I(G)$ of a graph $G = (V,E)$ is a bipartite graph with $V \cup E$ as set of vertices, and for two vertices $v',e'$ of $I(G)$ corresponding respectively to a vertex $v$ and an edge $e$ of $G$, $v'$ is adjacent to $e'$ in $I(G)$ if $v$ is incident to $e$ in $G$, \ie\ $v$ is an endpoint of $e$.

\subsection{Tree-width}

A \define{tree-decomposition} (see \eg~\cite{Bo98}) of a graph $G$ is a tree $T$ in which each node $i \in T$ has an assigned set of vertices $X_i \subseteq V(G)$ (called \define{bag}), such that (1) every vertex $v \in V(G)$ appears in at least one bag $X_i$ of $T$, (2) every edge $uv \in V(G)$ has its both endpoints appearing in the same bag $X_j$ of $T$ (for some $j$), and (3) for every vertex $v \in V(G)$, the bags containing $v$ induce a connected subtree of $T$.
The width of a tree-decomposition is the size of the largest bag of $T$ minus one, \ie, $\max_{i \in T} |X_i| - 1$.
The \define{tree-width} of a graph $G$ is then the minimum width over all tree-decompositions of $G$.

\subsection{Parameterized complexity}

A problem is in \FPT\ (fixed-parameter tractable) parameterized by some parameter $k$ if it can be solved in $\ppO(f(k) \cdot p(n))$ time, for some computable function $f$ and a polynomial $p$, where $n$ is the size of the input; those problems are considered to be \emph{tractable} from the parameterized viewpoint.
Parameterized problems considered to be \emph{intractable} are at least \W{1}-hard, where \W{1} is one of the most important classes of the parameterized complexity theory and believed to be strictly including the class \FPT.
Finally, a problem is in \XP\ parameterized by some parameter $k$ if it can be solved in $\ppO(n^{f(k)})$ time, for some computable function $f$, where $n$ is the size of the input; the parameterized class \XP\ is known to strictly contain the parameterized class \W{1}, and hence also \FPT (see~\cite{FlGr06}).

See \eg~\cite{DoFe99,FlGr06} for some surveys on the parameterized complexity theory.

\subsection{Generalized domination}

Let $\si,\rh$ be two fixed subsets of $\N$ (where $\N$ denotes the set of nonnegative integers).
The \eSRDS\ problem (introduced in~\cite{Te94}, see also~\cite{Te94PhD,TePr97}) is defined as follows:

\problem{\eSRDS}{A graph $G = (V,E)$.}{Is there a subset $D \subseteq V$ such that for every $v \in D$, $|N(v) \cap D| \in \si$, and for every $v \notin D$, $|N(v) \cap D| \in \rh$? If so, $D$ is called a \SR-dominating set.}

A vertex $v$ is \define{dominated} by a vertex $u$ if $v \in N[u]$, and it is \define{dominated} by a set $S \subseteq V$ if $v \in N[S]$.
A subset of vertices $S \subseteq V$ is called a \define{dominating set} if every vertex of $G$ is dominated by $S$; a vertex $v$ in $S$ is said to be \define{selected}, while a vertex $v'$ not in $S$ is said to be \define{non-selected}.
A selected (\resp\ non-selected) vertex $v$ in a \SR-dominating set $D$ of a graph $G$ is \define{satisfied} with respect to \si\ (\resp\ \rh) if $|N_G(v) \cap D| \in \si$ (\resp\ $\in \rh$).

\section{Some general results on the parameterized complexity of \SRDS}\label{sec:general_complexity}

In the following, we give some general results on the parameterized complexity of \eSRDS, for instance when it is parameterized by the tree-width of the input graph, and when it is parameterized both by the tree-width of the input graph and the maximum size of a solution.

\subsection{When parameterized by the tree-width of the input graph}

Firstly, we consider the \SRDS\ problem when parameterized by the tree-width of the input graph, and prove that is in \XP\ for any polytime decidable recursive sets \si\ and \rh.
This result naturally motivates our study on the \emph{real} parameterized complexity class in which \eSRDS\ fall, depending on the sets \si\ and \rh.

\begin{theorem}
Let \si\ and \rh\ be two recursive sets of integers for which the membership of any integer $t$ can be computed in polynomial time.
Then \eSRDS\ (as well as minimization and maximization) is in \XP\ when parameterized by the tree-width of the input graph.
\end{theorem}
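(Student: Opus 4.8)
The plan is to run a bottom-up dynamic programming along a nice tree-decomposition of the input graph $G$, essentially the algorithm of Appendix~\ref{app:fpt_cases} but with the bounded automaton states replaced by the \emph{exact} count of selected neighbours. First I would take (or compute, in time depending only on \tw\ times a polynomial) a nice tree-decomposition $(T,\chi)$ of $G$ of width \tw, and set $n = |V(G)|$. For a node $i$ with bag $X_i$, a \emph{characteristic} assigns to each $v \in X_i$ a pair $(b_v,c_v)$, where $b_v \in \{\mathrm{in},\mathrm{out}\}$ says whether $v$ is selected and $c_v \in \{0,1,\dots,n-1\}$ records how many selected neighbours $v$ has so far in the subgraph $G_i$; since $G$ has no loops a vertex has at most $n-1$ neighbours, so this range is exhaustive. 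A characteristic is feasible if it is realised by a partial solution in which every already-forgotten vertex $w$ satisfies $c_w \in \si$ when $b_w = \mathrm{in}$ and $c_w \in \rh$ when $b_w = \mathrm{out}$; this last test is exactly where the hypothesis is used, since deciding $t \in \si$ or $t \in \rh$ for $t \le n-1$ costs $\poly(n)$ time by assumption.

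The node operations mirror those of Appendix~\ref{app:fpt_cases}. At a \node{leaf} node introducing $v$ we keep the two characteristics with $c_v = 0$. At an \node{introduce} node adding $v$ we branch on $b_v$, initialise $c_v$ to the number of selected neighbours of $v$ already present in the bag (these are all of $v$'s neighbours in $G_i$), and, if $b_v = \mathrm{in}$, increment $c_u$ for every $u \in X_i$ with $uv \in E$. At a \node{forget} node removing $v$ we keep only the characteristics in which $c_v$ is valid for $b_v$ (one membership query), and project the state of $v$ out. At a \node{join} node with children $j,k$ we merge each pair of compatible characteristics by setting, for every $v \in X_i$, $c_v = c_v^{(j)} + c_v^{(k)} - |\{u \in X_i : b_u = \mathrm{in},\ uv \in E\}|$, so that the selected neighbours of $v$ lying inside the bag, which are counted in both children, are subtracted exactly once; this value cannot exceed $n-1$, since it equals $|N(v) \cap (S_j \cup S_k)|$. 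At the root we answer ``yes'' iff some characteristic has $c_v$ valid for every $v$. Correctness is established as in Lemma~\ref{lem:bounded_tw/correct_join-nodes_operation}: the join takes the union $S_j \cup S_k$ without changing which vertices are selected, and the corrected $c_v$ is precisely the global number of selected neighbours of $v$.

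For the running time, each bag has at most $\tw+1$ vertices, so a node carries at most $(2n)^{\tw+1} = n^{\ppO(\tw)}$ characteristics. The nice tree-decomposition has $\ppO(\tw \cdot n)$ nodes; the \node{join} step inspects pairs of characteristics and so costs $\bigl(n^{\ppO(\tw)}\bigr)^2 = n^{\ppO(\tw)}$, while every other operation is cheaper and each step issues only $\poly(n)$ membership queries on integers $\le n-1$. Altogether the algorithm runs in $n^{\ppO(\tw)}$ time, so \eSRDS\ is in \XP\ parameterized by \tw. For \minSRDS\ and \maxSRDS\ it suffices to store with every characteristic the minimum (\resp\ maximum) size of a partial solution realising it; this quantity is updated additively (by $1$ at an \node{introduce} node that selects the new vertex, and by summation at a \node{join} node, subtracting the selected vertices of $X_i$ counted in both children) and it does not enlarge the state space, so the same $n^{\ppO(\tw)}$ bound applies.

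The only point that needs care — the main obstacle, such as it is — is the bookkeeping at \node{join} nodes, where a selected neighbour of $v$ lying inside the current bag is counted by both children and must be discounted exactly once; this is handled precisely as in the \FPT\ algorithm. Everything else is routine, and the crucial structural observation is simply that counts never need to exceed $n-1$: this keeps the number of states polynomial for fixed \tw\ and makes the polynomial-time membership assumption on \si\ and \rh\ sufficient.
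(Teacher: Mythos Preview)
Your proposal is correct and follows essentially the same idea as the paper: the key observation in both is that neighbour counts are bounded by $n-1$, so the state space per bag vertex is $\ppO(n)$ and the DP runs in $n^{\ppO(\tw)}$. The paper phrases this as truncating to the finite sets $\si' = \si \cap \{0,\ldots,|V|\}$ and $\rh' = \rh \cap \{0,\ldots,|V|\}$ and then invoking the \FPT\ algorithm of Appendix~\ref{app:fpt_cases}, whereas you spell out the same DP directly with explicit counters; the content is identical.
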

\begin{proof}
Let $G = (V,E)$ be the input graph of tree-width $\tw$.
Indeed, any \SR-dominating set will be of cardinality at most $|V|$.
Hence we can consider the problem \textsc{$[\si',\rh']$-Dominating Set} where $\si' = \si \cap \{0,\ldots,|V|\}$ and $\rh' = \rh \cap \{0,\ldots,|V|\}$ are both finite; note that $\si'$ and $\rh'$ can be computed in polynomial time, as we supposed that $\si$ and $\rh$ are computable in polynomial time.
Then inspired by algorithms solving \eSRDS\ in $\ppO(|\si| + |\rh|)^{f(tw)} \cdot n^{\ppO(1)}$ time when \si\ and \rh\ are both finite (see~\cite{RoBoRo09} and \cite[Theorem~3.2.1]{Ch11PhD}), one can easily construct an algorithm solving \eSRDS\ in \rfpt\ time when parameterized by the tree-width of the input graph.
\end{proof}

Using the well-known theorem from Courcelle \etal~\cite{CoMaRo01}, one can improve on the previous theorem when \si\ and \rh\ are both ultimately periodic sets, and prove that \eSRDS\ is in \FPT\ for such sets when parameterized by tree-width.

\begin{theoremnn}[\cite{CoMaRo01}]
Any graph problem expressible as an \CMSO\ logic formulae can be solved in \FPT\ time when parameterized by the tree-width of the input graph.
\end{theoremnn}

Hence to prove that \eSRDS\ is \FPT\ when parameterized by tree-width, it suffices to give a \CMSO\ logic formulae expressing this problem when given two ultimately periodic sets \si\ and \rh, as described in~\cite[Section~3.2.1]{Ch11PhD}.
However, this general theorem from Courcelle \etal~\cite{CoMaRo01} can only be used to quickly prove a problem to be \FPT, and does not provide a practical algorithm.

Using a careful analysis on the specific combinatorics of \eSRDS\ problem, van~Rooij \etal~\cite{RoBoRo09} proved that this problem can be solved in efficient \FPT\ time (\ie, single exponential in the parameter) when \si\ and \rh\ are finite or cofinite sets of integers; under the Strong Exponential Time Hypothesis~\cite{ImPa01}, and by a result from Lokshtanov \etal~\cite{LoMaSa11} on algorithms parameterized by tree-width, their algorithm is optimal.
In~\cite[Section~3.2.2]{Ch11PhD}, we extended the result of~\cite{RoBoRo09} to ultimately periodic sets.

\begin{theoremnn}[{\cite[Theorem~3.2.1]{Ch11PhD}}]
Let \si\ and \rh\ be two ultimately periodic sets of integers.
Then \eSRDS\ (as well as minimization and maximization) can be solved in efficient \FPT\ time when parameterized by the tree-width of the input graph.
\end{theoremnn}

\subsection{When parameterized both by the tree-width of the input graph and the maximum size of a solution}

We have seen that the parameterized complexity of \eSRDS\ highly depends on the sets \si\ and \rh, when it is parameterized solely by the tree-width of the input graph.
On the other hand, when parameterized both by the tree-width of the input graph and the maximum size of a solution, the problem becomes \FPT\ for any recursive sets \si\ and \rh.

\begin{theorem}
Let \si\ and \rh\ be two recursive sets of integers. Then \kSRDS\ is \FPT\ when parameterized both by the tree-width of the input graph and the maximum size $k$ of a solution.
\end{theorem}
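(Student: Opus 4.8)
The plan is to reduce, using the bound $k$ on the solution size, to the case where \si\ and \rh\ are \emph{finite}, and then run a standard tree-decomposition dynamic programming augmented by a counter for the number of selected vertices. No new idea beyond these two observations is needed; the point is simply that both steps cost only a function of the parameters $k$ and \tw.

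First I would observe that any \SR-dominating set $D$ with $|D| \le k$ automatically satisfies $|N(v) \cap D| \le |D| \le k$ for every $v \in V$. Hence, setting $\si' = \si \cap \{0,1,\dots,k\}$ and $\rh' = \rh \cap \{0,1,\dots,k\}$, a vertex subset $D$ of size at most $k$ is a \SR-dominating set of $G$ \emph{if and only if} it is a $[\si',\rh']$-dominating set of $G$. Since \si\ and \rh\ are recursive, the finite sets $\si'$ and $\rh'$ can be computed from $k$ by $k+1$ membership queries, which takes time $g(k)$ for some computable function $g$. Note that, in contrast to the preceding theorem, we do \emph{not} need the membership test to run in polynomial time, precisely because this truncation is a parameter-dependent precomputation whose cost may depend arbitrarily on $k$.

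It then remains to decide, for the fixed finite sets $\si'$ and $\rh'$, whether $G$ admits a $[\si',\rh']$-dominating set of size at most $k$. I would compute a nice tree-decomposition of $G$ of width \tw\ (via Bodlaender's algorithm, or from the decomposition provided with the input) and carry out a bottom-up dynamic programming in the spirit of Telle and Proskurowski and of the algorithm of Appendix~\ref{app:fpt_cases}. A \emph{characteristic} of a bag $X_i$ records, for each $v \in X_i$, whether $v$ is selected and the number of its already-counted selected neighbors truncated at $k$ (truncation is harmless, since no such count can ever exceed $k$), together with one global counter in $\{0,1,\dots,k\}$ for the number of selected vertices in the processed subgraph $G_i$; any partial solution that would push this counter above $k$ is discarded. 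The number of characteristics kept at a node is thus at most $\big(2(k+1)\big)^{\tw+1}\cdot (k+1)$, a function of $k$ and \tw\ only. The \node{leaf}, \node{introduce}, \node{forget} and \node{join} operations are the usual ones — for \node{join}, a vertex of the bag that is selected must not be double-counted, exactly as in Appendix~\ref{app:fpt_cases} — and even a naive \node{join} runs in time polynomial in the table size, which is good enough for \FPT. The instance is a yes-instance iff some characteristic of the root bag has all vertex-states final (in $\si'$ resp.\ $\rh'$) and counter at most $k$. The overall running time is $g(k) + f(k,\tw)\cdot\poly(n)$ for a computable $f$, so \kSRDS\ is \FPT\ parameterized by $k + \tw$.

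I do not expect a real obstacle here. The only delicate point is the first step: realizing that mere recursiveness of \si\ and \rh, with no complexity bound whatsoever, already suffices, because the relevant part of each set is the finite initial segment $\{0,\dots,k\}$ and extracting it is a parameter-only computation. Once that reduction is in place, the remainder is a routine instance of dynamic programming for finite $[\si',\rh']$-domination on bounded tree-width, with a size counter bolted on.
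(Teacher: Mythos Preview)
Your proposal is correct and follows the same approach as the paper: truncate $\sigma$ and $\rho$ to $\sigma' = \sigma \cap \{0,\dots,k\}$ and $\rho' = \rho \cap \{0,\dots,k\}$ in time depending only on $k$ (using only recursiveness), then solve the resulting finite-set instance by dynamic programming over a tree-decomposition as in Appendix~\ref{app:fpt_cases}. You are in fact a bit more explicit than the paper, which simply invokes the algorithm of Appendix~\ref{app:fpt_cases} without spelling out the size counter, whereas you correctly bolt on a global counter in $\{0,\dots,k\}$ to enforce $|D|\le k$.
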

\begin{proof}
In \kSRDS, we ask for a \SR-dominating set of cardinality at most $k$.
Hence every vertex of the input graph will have at most $k$ neighbors in the \SR-dominating set, so we can reduce this problem to \textsc{$[\si',\rh']$-Dominating Set} where $\si' = \si \cap \{0,\ldots,k\}$ and $\rh' = \rh \cap \{0,\ldots,k\}$ are both finite.
If $a(k)$ (\resp\ $b(k)$) denotes the maximum time needed to decide whether $t \in \si$ (\resp\ $t \in \rh$) for $t \leq k$, then $\si'$ and $\rh'$ can be computed in $\ppO(k \cdot a(k))$ time (not depending on the size of the input graph) and hence in \rfpt\ time.
Then again inspired by dynamic programming algorithms solving \eSRDS\ in $\ppO(|\si| + |\rh|)^{f(tw)} \cdot n^{\ppO(1)}$ time when \si\ and \rh\ are both finite (see~\cite{RoBoRo09} and \cite[Theorem~3.2.1]{Ch11PhD}), one can easily construct an algorithm solving \kSRDS\ in \rfpt\ time when parameterized both by the tree-width of the input graph and the maximum size $k$ of a solution.
\end{proof}

\section{Proof of Theorem~\ref{thm:main}: \W{1}-hardness of \eSRDSps}\label{sec:w1-hardness_first-step}

To prove Theorem~\ref{thm:main}, we first reduce from \kCDS\ to a variant of \eSRDS\ in which the input also contains a subset of \emph{preselected} vertices $W$ which are inconditionally included in the \SR-dominating set:

\paramproblem{\eSRDSps}{A graph $G = (V,E)$ of tree-width \tw, a subset $W \subseteq V$ of preselected vertices.}{\tw.}{Is there a set $D \subseteq V$, with $W \subseteq D$, such that for every $v \in V \setminus W$, if $v \in D$ then $|N(v) \cap D| \in \si$, and if $v \notin D$, then $|N(v) \cap D| \in \rh$?}

The problem \kCDS\ is known to be \W{1}-complete when parameterized by the tree-width of the input graph plus the maximum size of the capacitated dominating set, by a result from Dom \etal~\cite{DoLoSaVi08}:

\paramproblem{\kCDS}{A graph $G = (V,E)$ of tree-width \tw, a function $\capacity : V \rightarrow \N$, a positive integer $k$.}{k + \tw.}{Does $G$ admit a set $S$ of cardinality at most $k$ and a \define{domination function} \dom\ associating to each vertex $v \in S$ a set $\dom(v) \subseteq N(v) \setminus S$ of at most $\capacity(v)$ vertices, such that every vertex $w \in V \setminus S$ is in $\dom(v)$ for some $v \in S$? If so, $(S,\dom)$ is called a $k$-capacitated dominating set.}

In the following, we give an \rfpt-reduction from \kCDS\ to \eSRDSps, thus proving this latter problem to be \W{1}-hard when parameterized by the tree-width of the input graph.

Let \si\ be a set with arbitrarily large gaps between two consecutive elements (such that a gap of length at least $t$ is at distance $\poly(t)$ in \si), and let \rh\ be cofinite.
We define $q_0 = \min\{q \in \rh \mid \forall r \geq q, r \in \rh\}$, $\rh_{\min} = \min \{ q \mid q \in \rh \}$, and $\si_{\min} = \min \{ p \mid p \in \si \}$.
Furthermore, we suppose that $\rh_{\min} \geq 2$ (hence $q_0 \geq 2$) and $\si_{\min} \geq 1$.\footnote{We always suppose $0 \notin \rh$, as otherwise $S = \emptyset$ would be a trivial \SR-dominating set.}
The parameterized complexity of \eSRDSps\ (and more generally \eSRDS) for the extremal cases where $\rh_{\min} = 1$ or $\si_{\min} = 0$ is left open.

\subsection{Some functions on \si, and a technical condition}\label{subsec:functions_on_sigma}

In order to ease the description of our \rfpt-reduction, we suppose we are given the following computable functions on \si:
\begin{itemize}
	\item $\gapm(x,q)$: returns the lowest element $p$ of \si, greater than $q$, for which there are at least $x$ integers immediately \emph{before} $p$ which are not elements of \si;
	\item $\gapp(x,q)$: returns the lowest element $p$ of \si, greater than $q$, for which there are at least $x$ integers immediately \emph{after} $p$ which are not elements of \si;
	\item $\gapz(q)$: returns the lowest element $p$ of \si\ greater than $q$.
\end{itemize}
Intuitively, $\gapm$ can be read as ``gap before'' and $\gapp$ as ``gap after''; then $\gapm(x,q)$ (\resp\ $\gapp(x,q)$) is the minimum element $p > q$ of \si\ such that there is a gap of size at least $x$ right before (\resp\ right after) $p$.
Those functions will allow us to find some gaps of particular length in \si\ used in the construction of our gadgets for the \rfpt-reduction.

Some of our gadgets (for instance \gadgetname{capacity} and \gadgetname{limitation} gadgets, see below) will require gaps of length depending on the number of vertices in the input graph.
Thus we need a technical condition on $\gapp$, allowing us to get a polynomial-sized construction: we suppose that a gap of length at least $t$ can be found at distance $\poly(t)$ in \si, that is $\gapp(t,q)$ (for some $q \in \N$) is polynomial in $t$.
More formally, we suppose the following condition on $\gapp$:
\[\exists c \in \N, \forall t,q \in \N : \gapp(t,q) = \ppO(q + t^c)\]

\subsection{Overview of the construction}

Let $I(G)$ be the incidence graph of the original graph $G$.
We denote by $N_I(v)$ the neighborhood of $v$ in $I(G)$, corresponding to the set of edges incident to $v$ in $G$.
We call \define{original-vertex} a vertex in $I(G)$ which corresponds to a vertex in $G$, and \define{edge-vertex} a vertex in $I(G)$ which corresponds to an edge in $G$.

Given an instance $(G,\capacity,k)$ of \kCDS, where $G$ is of tree-width at most \tw\ and $\capacity$ is the \define{capacity function} on the vertices of $G$, we construct an instance $(H,W)$ of \eSRDSps\ with $H$ of bounded tree-width, such that $G$ admits a \kCDS\ if and only if $H$ admits a \SR-dominating set where vertices of $W$ are preselected.

Intuitively, a $k$-capacitated dominating set $S$ in $G$ should result in a \SR-dominating set $D$ with preselected vertices in $H$ as follows (see Figure~\ref{fig:encoding_of_capacitated_dominating_set}).
If a vertex of $G$ is in $S$, then the corresponding original-vertex of $H$ will be in $D$.
For each vertex $u \in V(G) \setminus S$ which is dominated by a vertex $v \in V(G)$, the edge-vertex $e$ in $I(G)$ representing the edge between $u$ and $v$ in $G$ will also be in $D$.
The \gadgetname{limitation} gadget~\gadget{L}{} will ensure that no more than $k$ original-vertices of $H$ are selected into $D$; those vertices will correspond to a $k$-capacitated dominating set $S$ in the original graph $G$.
The other vertices of $H$ (those which are not in $I(G)$) are vertices of the several gadgets, and may or may not be included in $D$.

\begin{figure}[ht]
	\begin{center}
		\subfigure[A $k$-capacitated dominating set in $G$.]{\includegraphics[height=3.5cm]{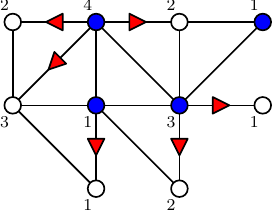}}
		\hspace{2cm}
		\subfigure[A \SR-dominating set in $I(G)$ (the gadgets are missing).]{\includegraphics[height=3.5cm]{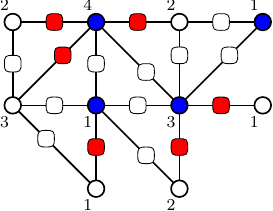}}
	\end{center}
	\caption{Encoding of a $k$-capacitated dominating set in $G$ as a \SR-dominating set in $I(G)$: blue vertices correspond to selected original-vertices, while red vertices correspond to selected edge-vertices. A number next to a vertex corresponds to its capacity in the \kCDS{} instance.}
	\label{fig:encoding_of_capacitated_dominating_set}
\end{figure}

The key idea of our \rfpt-reduction is that based on the \Gap\ functions, we can find sufficiently large gaps in \si, and use these gaps to control the number of selected neighbors of some particular vertices in the instance $(H,W)$, \eg, the number of selected neighbors of each selected original-vertex $v \in I(G)$ must not exceed its capacity in the corresponding $(G,\capacity,k)$ instance.

To construct $H$, we start with a copy of $I(G)$, and add some gadgets attached to original-vertices and edge-vertices of $I(G)$ as follows (see Figure~\ref{fig:overall_construction}):
\begin{itemize}
\item on each original-vertex $v \in I(G)$, we attach one \gadgetname{capacity} gadget~\gadget{C}{v} and one \gadgetname{satisfiability} gadget~\gadget{S}{v}; we also attach one \gadgetname{domination} gadget~\gadget{D}{v} which is also linked to all edge-vertices to which $v$ is adjacent;
\item on each edge-vertex $e \in I(G)$, we attach one \gadgetname{edge-selection} gadget~\gadget{E}{e};
\item we add a global \gadgetname{limitation} gadget~\gadget{L}{} with one \emph{central} vertex linked to every original-vertex of $I(G)$.
\end{itemize}

\begin{figure}[ht]
	\begin{center}
		\includegraphics[height=3cm]{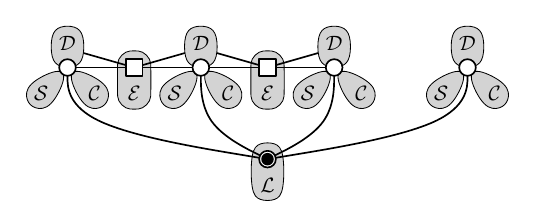}
	\end{center}
	\caption{Overview of the construction of the graph $H$. White vertices correspond to vertices of $I(G)$: white circles are original-vertices, and white squares are edge-vertices.}
	\label{fig:overall_construction}
\end{figure}

\subsection{Gadgets description and their correctness}

Before going into the details of our gadgets, let us introduce some notations.
In the following, a vertex of $H$ is said to be \emph{forced} if (by construction) it must be in any \SR-dominating set of $H$, \emph{choosable} if it can be satisfied no matter if it is in a \SR-dominating set of $H$ or not, and \emph{non-choosable} if it must not be in any \SR-dominating set of $H$.
Recall that a vertex is \emph{selected} if it is added to the considered \SR-dominating set, and \emph{preselected} vertices are selected vertices which are in the subset of preselected vertices $W$ given in the input of \eSRDSps.

Besides original- and edge-vertices, all our gadgets (almost) only contain these two kinds of vertices: forced or preselected vertices, and choosable vertices (see Lemma~\ref{lem:forced_and_choosable_vertices}); the only exception is for the \gadgetname{domination} gadget, which also contains a vertex which is ensured to be non-choosable.
Forced vertices are adjacent to a clique of size $\si_{\min}$, while choosable vertices are adjacent to $\gapz(q_0)$ forced vertices.
We now prove the following lemma, which provides necessary tools to prove the correctness of these gadgets.

\begin{lemma}\label{lem:forced_and_choosable_vertices}
Suppose that $H$ admits a \SR-dominating set $D$ with preselected vertices.
Then every forced vertex must be selected (\ie, be in $D$), and every choosable vertex is always satisfied no matter if it is selected or not.
\end{lemma}
\begin{proof}
By definition of the \eSRDSps\ problem, preselected vertices must be selected in any \SR-dominating set of $H$.

In every gadget, each forced vertex $v$ is included into a clique containing $\si_{\min} + 1$ vertices in total (including $v$), forcing it to be selected.
Indeed, as $D$ is a \SR-dominating set of $H$ with preselected vertices, and $0 \notin \rh$, at least one vertex $u$ in this clique must be selected (it may be $v$).
Then $u$ needs at least $\si_{\min}$ selected neighbors, and hence all vertices of the clique, including $v$, must be selected.

Finally, each choosable vertex has $\gapz(q_0) \in \si \cap \rh$ selected neighbors in total (forced or preselected), which allows it to be satisfied no matter if it is selected or not in the \SR-dominating set $D$ of $H$ with preselected vertices.

This completes the proof of the lemma.
\end{proof}

We now describe the gadgets used for the \rfpt-reduction, and prove each of them to be correct.
In the corresponding gadgets' figures, black triangles represent preselected vertices, black cliques and plain disks represent forced vertices, circles represent choosables vertices, and crosses represent non-choosables vertices.

\subsubsection{Domination gadget}

For each original-vertex $v \in I(G)$, we add to $H$ one \gadgetname{domination} gadget~\gadget{D}{v}.
This gadget ensures that an original-vertex of $I(G)$ is either selected, or it has at least one selected neighbor in $I(G)$, \ie, the selected original- and edge-vertices form a dominating set of $I(G)$.
The gadget is constructed as follows (see Figure~\ref{fig:gadget_D}): given an original-vertex $v \in I(G)$, we add a non-choosable vertex $v'$ linked to $v$ and to each edge-vertex $e \in N_I(v)$, $q_0 - 2$ independent preselected vertices linked to $v'$, and an extra forced vertex $v''$ linked to $v'$ with $\gapp(1,\si_{\min}) - \si_{\min}$ independent preselected neighbors and to a clique with $\si_{\min}$ vertices.

\begin{figure}[ht]
	\begin{center}
		\includegraphics[height=3.1cm]{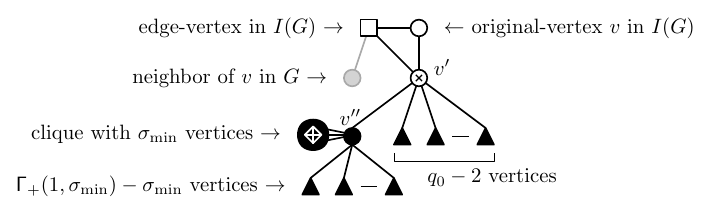}
	\end{center}
	\caption{\gadgetname{Domination} gadget~\gadget{D}{v}}
	\label{fig:gadget_D}
\end{figure}

\begin{lemma}\label{lem:gadget_D}
If $H$ admits a \SR-dominating set with preselected vertices, then either the original-vertex $v$ is selected, or at least one of its edge-neighbors is selected.
\end{lemma}
\begin{proof}
Suppose $H$ admits a \SR-dominating set with preselected vertices.

The forced vertex $v''$ has $\gapp(1,\si_{\min})$ forced and preselected neighbors, and its only other neighbor is $v'$.
Hence $v'$ cannot be selected, as otherwise $v''$ would have $\gapp(1,\si_{\min}) + 1 \notin \si$ selected neighbors and hence will not be satisfied, which would contradict the fact that $H$ admits a \SR-dominating set with preselected vertices.

Now notice that $v'$ has $q_0 - 1$ preselected and forced neighbors: $q_0 - 2$ preselected neighbors, and one more forced neighbor $v''$.
Recall that $q_0 - 1 \notin \rh$.
Hence, it must have at least one more selected neighbor in order for $v'$ to be satisfied, and this neighbor cannot be $v''$ as proved before; its only other selectable neighbors are $v$, and the edge-neighbors of $v$ to which $v'$ is adjacent.
So either $v$ must be selected, or at least one of these edge-neighbors must be selected.
\end{proof}

\subsubsection{Edge-selection gadget}

For each edge-vertex $e \in I(G)$, we add to $H$ one \gadgetname{edge-selection} gadget~\gadget{E}{e}.
The selected edge-vertices in $I(G)$ will correspond to the domination function of the \kCDS\ problem we reduce from.
This gadget ensures that if the edge-vertex $e$ is selected, then so is at least one of its original-neighbors in $I(G)$.
The gadget is constructed as follows (see Figure~\ref{fig:gadget_E}): given an edge-vertex $e \in I(G)$, we add $\gapm(1,q_0+1) - 1$ independent preselected vertices linked to $e$.

\begin{figure}[ht]
	\begin{center}
		\includegraphics[height=2.4cm]{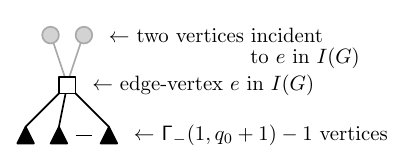}
	\end{center}
	\caption{\gadgetname{Edge-selection} gadget~\gadget{E}{e}}
	\label{fig:gadget_E}
\end{figure}

\begin{lemma}\label{lem:gadget_E}
If $H$ admits a \SR-dominating set with preselected vertices, then if the edge-vertex $e$ is selected, then so is at least one of its original-neighbors. 
\end{lemma}
\begin{proof}
Suppose $H$ admits a \SR-dominating set with preselected vertices in which the edge-vertex $e$ is selected.

Then, $e$ has exactly $\gapm(1,q_0+1) - 1$ preselected neighbors in $H$, all being from its associated \gadgetname{edge-selection} gadget~\gadget{E}{e}.
If $e$ is selected, then it needs at least one more selected neighbor in $H$, corresponding to an original-vertex in $I(G)$, as $\gapm(1,q_0+1) - 1 \notin \si$ and $\gapm(1,q_0+1) \in \si$.
Notice that if $e$ is not selected, then it is satisfiable as it then has $\gapm(1,q_0+1) - 1 \in \rh$ selected neighbors in this gadget.
\end{proof}

\subsubsection{Satisfiability gadget}

For each original-vertex in $v \in I(G)$, we add to $H$ one \gadgetname{satisfiability} gadget~\gadget{S}{v}.
This gadget allows any non-selected original-vertex in $I(G)$ to have a valid number of selected neighbors in $H$ with respect to \rh, \ie, it can have at least $q_0 \in \rh$ selected neighbors.
The gadget is constructed as follows (see Figure~\ref{fig:gadget_S}): given an original-vertex $v \in I(G)$, we add $q_0$ independent choosable vertices with $\gapz(q_0)$ independent preselected neighbors each.

\begin{figure}[ht]
	\begin{center}
		\includegraphics[height=3.0cm]{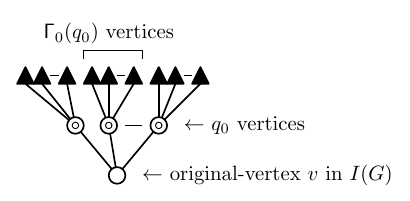}
	\end{center}
	\caption{\gadgetname{Satisfiability} gadget~\gadget{S}{v}}
	\label{fig:gadget_S}
\end{figure}

%
%

\subsubsection{Capacity gadget}

For each original-vertex in $v \in I(G)$, we add to $H$ one \gadgetname{capacity} gadget~\gadget{C}{v}.
This gadget ensures that a selected original-vertex $v$ of $I(G)$ has at most $\capacity(v)$ selected neighbors in $I(G)$, and is satisfied no matter how many of its edge-neighbors are selected.
Moreover, this gadget allows a selected original-vertex to have a valid number of selected neighbors in $H$ with respect to \si.
The gadget is constructed as follows (see Figure~\ref{fig:gadget_C}): given an original-vertex $v \in I(G)$, we add $\gapp\big(\ppdeg_G(v)+q_0, \capacity(v)\big) - \capacity(v) - 1$ independent preselected vertices linked to $v$, and $\capacity(v)$ independent choosable vertices linked to $v$ with $\gapz(q_0+1) - 1$ independent preselected neighbors each.

\begin{figure}[ht]
	\begin{center}
		\includegraphics[height=4.2cm]{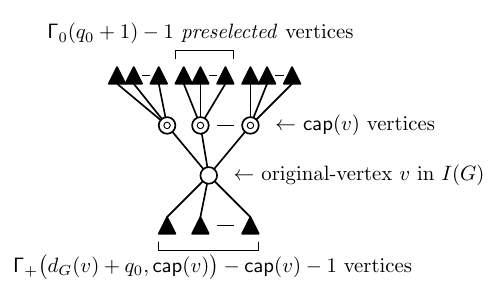}
	\end{center}
	\caption{\gadgetname{Capacity} gadget~\gadget{C}{v}}
	\label{fig:gadget_C}
\end{figure}

\begin{lemma}\label{lem:gadget_C}
If $H$ admits a \SR-dominating set with preselected vertices in which the original-vertex $v$ is selected, then $v$ has at most $\capacity(v)$ selected edge-neighbors in $I(G)$.
\end{lemma}
\begin{proof}
Suppose $H$ admits a \SR-dominating set with preselected vertices in which the original-vertex $v$ is selected.

Then, it has $\gapp\big(\ppdeg_G(v)+q_0, \capacity(v)\big) - \capacity(v) - 1$ forced neighbors in the \gadgetname{capacity} gadget~\gadget{C}{v} associated to it.
The other neighbors it has in $H$ are edge-vertices in $I(G)$, the central forced vertex $c$ of the \gadgetname{limitation} gadget~\gadget{L}{}, and an independent set of $\capacity(v) + q_0 > \capacity(v)$ choosable vertices from \gadgetname{capacity}~\gadget{C}{v} and \gadgetname{satisfiability}~\gadget{S}{v} gadgets which can be selected if $v$ has less than $\capacity(v)$ selected edge-neighbors in $I(G)$.

Hence it can have at most $\capacity(v)$ more selected neighbors in $H$, as otherwise by definition of $\gapp$ it would need $\ppdeg_G(v) + q_0 + 1 + \capacity(v) + 1$ selected vertices in total to be satisfied, which is $1$ more than its overall degree in $H$.
\end{proof}

\subsubsection{Limitation gadget}

We add to $H$ one global \gadgetname{limitation} gadget~\gadget{L}{}.
This gadget limits the number of selected original-vertices in $I(G)$ to at most $k$ vertices, where $k$ is the parameter of the original \kCDS\ problem we reduce from.
The gadget is constructed as follows (see Figure~\ref{fig:gadget_L}): for the whole graph $H$, we add one central forced vertex $c$ linked to every original-vertex of $I(G)$ and to a clique with $\si_{\min}$ vertices, $\gapp\big(|V(G)|+k,k + \si_{\min}\big) - k - \si_{\min}$ independent preselected vertices linked to $c$, and $k$ independent choosable vertices linked to $c$ with $\gapz(q_0) - 1$ independent preselected neighbors each.

\begin{figure}[ht]
	\begin{center}
		\includegraphics[height=4.2cm]{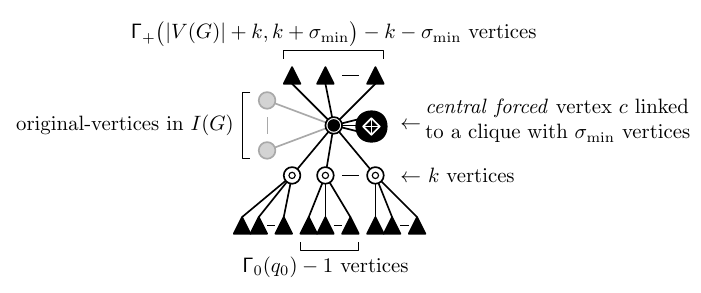}
	\end{center}
	\caption{\gadgetname{Limitation} gadget~(\gadget{L}{})}
	\label{fig:gadget_L}
\end{figure}

\begin{lemma}\label{lem:gadget_L}
If $H$ admits a \SR-dominating set with preselected vertices, then at most $k$ original-vertices are selected.
\end{lemma}
\begin{proof}
Suppose $H$ admits a \SR-dominating set $D$ with preselected vertices.

Notice first that by Lemma~\ref{lem:forced_and_choosable_vertices}, the central vertex $c$ is forced and hence must be selected in the \SR-dominating set $D$ with preselected vertices.
The vertex $c$ has $\gapp\big(|V(G)| + k, k + \si_{\min}\big) - k$ selected neighbors in the \gadgetname{limitation} gadget~\gadget{L}{}, including the $\si_{\min}$ forced vertices from the clique, and an independent set of $k$ choosable neighbors, which may be selected in $D$.

If at least $k+1$ original-vertices are selected, then the number of selected neighbors of $c$ will be at least $\gapp\big(|V(G)| + k, k + \si_{\min}\big) + 1$, and can be at most $\gapp\big(|V(G)| + k, k + \si_{\min}\big) + |V(G)| + k$ if all the original-vertices and all the choosable neighbors of $c$ are selected.
However, by definition of $\gapp$ and due to the gap of length $|V(G)| + k$, none of these numbers is in \si, and hence none of them allows $c$ to be satisfied, which is a contradiction to the fact that $D$ is a \SR-dominating set of $H$ with preselected vertices.
Notice that if at most $k$ original-vertices are selected, then some of its choosable neighbors may be selected in $D$ in order for $c$ to be satisfied with $\gapp\big(|V(G)| + k, k + \si_{\min}\big) \in \si$ selected neighbors in total.

Hence at most $k$ original-vertices are selected in $D$.
\end{proof}

\subsubsection{Correctness of the overall construction}

We now deserve the remaining of this section to the proof of correctness of the overall construction, that is $(G,\capacity,k)$ is a yes-instance of \kCDS\ if and only if $(H,W)$ is a yes-instance of \eSRDSps, and that the reduction is \rfpt.

Let us first prove that we can construct a yes-instance of \eSRDSps\ from a yes-instance of \kCDS, using our construction described above.

\begin{lemma}\label{lem:construction_of_a_solution}
Let $(S,\dom)$ be a $k$-capacitated dominating set of the original graph $G$.
Given $I(G)$ the incidence graph of $G$, we construct $D_0 \subseteq V(H)$ as follows:
\begin{itemize}
\item for each selected vertex $v_G \in V(G) \cap S$, we add the corresponding original-vertex $v_H \in V(H)$ to $D_0$;
\item for each dominated vertex $u_G \in V(G) \setminus S$ such that $u \in \dom(v)$ for some selected vertex $v_G \in V(G) \cap S$, we add the corresponding edge-vertex $e_H$ to $D_0$, where $e_H$ is the edge-vertex adjacent to the original vertices $u_H,v_H \in V(H)$.
\end{itemize}
Then $D_0$ can be extended to a \SR-dominating set $D$ of $H$ with preselected vertices.
\end{lemma}
\begin{proof}
Suppose that the original graph $G$ admits a $k$-capacitated dominating set $(S,\dom)$, and let $D_0$ be constructed as described in the lemma statement.
We prove that $D_0$ can be extended to a \SR-dominating set $D$ with preselected vertices to the whole constructed graph $H$.

Notice that each gadget contains two types of vertices: forced or preselected vertices, and choosable vertices (excepted for \gadgetname{domination} gadget, which also contains a non-choosable vertex).
By definition of the \eSRDSps\ problem, preselected vertices must be selected in $D$.
By Lemma~\ref{lem:forced_and_choosable_vertices}, every forced vertex must be selected in $D$, and every choosable vertex can be satisfied no matter if it is selected or not in $D$.
Finally, the non-choosable vertex $v'$ in each \gadgetname{domination} gadget has exactly $q_0 \in \rh$ selected neighbors: $q_0 - 2$ preselected neighbors, $1$ forced neighbor ($v''$), and exactly one of the original-vertex or edge-vertex selected in its neighborhood (depending on whether the original-vertex is selected or not) by construction of $D_0$.

Now let us show that original- and edge-vertices can also be satisfied.
Every original-vertex has $\gapp\big(\ppdeg_G(v)+q_0, \capacity(v)\big) - \capacity(v)$ forced neighbors (from \gadgetname{capacity}~\gadget{C}{v} and \gadgetname{limitation}~\gadget{L}{} gadgets) and $\capacity(v) + q_0$ choosable vertices (from \gadgetname{satisfiability}~\gadget{S}{v} gadget) in $H$, which altogether allows the original-vertex to have $\gapp\big(\ppdeg_G(v)+q_0, \max\{\capacity(v),q_0\}\big) \in \si \cap \rh$ selected neighbors by selecting some choosable vertices if necessary; hence each original-vertex is satisfiable no matter if it is selected or not.
By construction of the \gadgetname{edge-selection} gadget, each edge-vertex is also satisfiable (see Lemma~\ref{lem:gadget_E}).

Altogether, this proves that all the vertices of the constructed graph $H$ can be satisfied by extending $D_0$ to a \SR-dominating set $D$ of $H$ with preselected vertices.
\end{proof}

As a corollary of Lemma~\ref{lem:construction_of_a_solution}, we obtain the ``only if'' part of our statement:

\begin{corollary}\label{cor:equivalence_of_solution_only_if}
If the input graph $G$ admits a $k$-capacitated dominating set, then the constructed graph $H$ admits a \SR-dominating set with preselected vertices.
\end{corollary}

It remains to show the ``if'' part of our statement:

\begin{lemma}\label{lem:equivalence_of_solution_if}
If the constructed graph $H$ admits a \SR-dominating set with preselected vertices, then the input graph $G$ admits a $k$-capacitated dominating set.
\end{lemma}
\begin{proof}
Let $D$ be a \SR-dominating set of $H$ with preselected vertices.
Let $H_v$ be the original-vertices of $H$ corresponding to the vertices of the original graph $G$, and let $D_v \subseteq D$ (\resp\ $D_e \subseteq D$) be the selected original-vertices (\resp\ edge-vertices) in $H$ which arise from vertices of $G$ (\resp\ from edges of $G$).
Note that $D_v \subseteq H_v$.
By Lemma~\ref{lem:gadget_D} on \gadgetname{domination} gadget, each original-vertex in $H_v$ is either selected (\ie, is in $D_v$) or has a selected neighbor in $D_e$, and hence $D_v \cup D_e$ forms a dominating set in $I(G)$.
By Lemma~\ref{lem:gadget_E} on \gadgetname{edge-selection} gadget, each edge-vertex in $D_e$ has at least one selected original-neighbor in $D_v$.
By Lemma~\ref{lem:gadget_C} on \gadgetname{capacity} gadget, each selected original-vertex $v$ in $H_v$ (\ie, $v \in D_v$) has at most $\capacity(v)$ selected edge-neighbors in $D_e$.
For a selected original-vertex $v$, we set $\dom(v) = \{u \mid u \in N_H(v) \cap D_e\}$.
Finally, by Lemma~\ref{lem:gadget_L} on \gadgetname{limitation} gadget, $|D_v| \leq k$.
Then $(D_v,\dom)$ is a $k$-capacitated dominating set of the original $G$, where $\dom$ is the domination function of vertices in $G$.
\end{proof}

\begin{lemma}\label{lem:fpt_reduction}
The reduction from \kCDS\ to \eSRDS\ is an \rfpt-reduction.
More precisely, the graph $H$ can be constructed in $\poly\big(|V(G)|\big)$ time, and $\tw(H) = \max\big\{2 \tw(G) + 1, \si_{\min} + 1\big\}$.
\end{lemma}
\begin{proof}
Let \si\ and \rh\ be fixed. Let $G$ be the original input graph of \kCDS\ and $H$ be the constructed graph for \eSRDSps.

First, let us prove that the size of $H$ is polynomial in the size of $G$.
The \gadgetname{limitation} gadget is created once for the whole graph, and a \gadgetname{capacity} gadget is created once for each original-vertex.
The cardinalities of those two gadgets depend on the $\Gap$ functions, and a polynomial in the size of $G$ (which justifies our technical constraint on \si, see Section~\ref{subsec:functions_on_sigma}).
The other gadgets are created once for each original-vertex or edge-vertex of $I(G)$, and each has a cardinality depending on the $\Gap$ functions and on \si\ and \rh.
Hence the overall number of vertices in $H$ depends only on \si\ and \rh, and polynomially on the number of vertices and edges of $G$.
The reduction is then polynomial in the size of the original graph $G$.

Let us now prove that the tree-width of $H$ is polynomial in the tree-width of $G$.
Let $T(G)$ be an optimal tree-decomposition of $G$.
We first construct a tree-decomposition $T\big(I(G)\big)$ of $I(G)$ of width at most $\max\{\tw(G), 2\}$ as follows.
Start with $T\big(I(G)\big) = T(G)$.
Then, for every edge-vertex $e_{uv}$ in $I(G)$, we attach a pendant bag $\{u,v,e_{uv}\}$ to one of the bags of $T\big(I(G)\big)$ containing both $u$ and $v$.
Indeed, the obtained $T\big(I(G)\big)$ is a tree-decomposition of $I(G)$ of width at most $\max\{\tw(G), 2\}$.

It remains to explain how one can construct a tree-decomposition $T(H)$ of $H$ of width bounded in terms of the tree-width of $I(G)$.
By construction, the \gadgetname{edge-selection}, \gadgetname{satisfiability} and \gadgetname{capacity} gadgets are trees (hence of tree-width $1$), while \gadgetname{domination} and \gadgetname{limitation} gadgets are trees with an additional clique of size $\si_{\min}$ attached to one of its nodes (hence of tree-width $\si_{\min} + 1$).
Therefore we start with $T(H) = T\big(I(G)\big)$, and consider an optimal rooted tree-decomposition of each gadget, which includes the original-vertex and/or edge-vertex to which the gadget is attached in $H$.
Then, for each gadget, we add its tree-decomposition to $T(H)$, and link its root to one of the bags containing its associated original- or edge-vertex; after this step, $T(H)$ is of width at most $\max\{\tw(G), \si_{\min} + 1\}$.
For every \gadgetname{domination} gadget, we also add the vertex $v'$ of this gadget to each bag in $T(H)$ containing its associated original-vertex; this may increase the width of $T(H)$ by at most a factor of $2$ (it is equivalent to replacing each original-vertex in the tree-decomposition by two vertices).
Finally, we add the central vertex $c$ of the \gadgetname{limitation} gadget to every bag containing an original-vertex; this may increase the width of $T(H)$ by at most $1$.

We end with a tree-decomposition $T(H)$ of $H$ of width $\max\big\{2 \tw(G) + 1, \si_{\min} + 1\big\}$, which is polynomial in the tree-width of $G$ (as $\si$ is fixed).
\end{proof}

This completes the proof of the first step of our reduction, that is \eSRDSps\ is \W{1}-hard when parameterized by tree-width.

\section{Proof of Theorem~\ref{thm:main}: \W{1}-hardness of \eSRDS}\label{sec:w1-hardness_second-step}

We know describe the second step of our reduction, from \eSRDSps\ to \eSRDS, and complete the proof of Theorem~\ref{thm:main}.

\subsection{Reduction.}

In \eSRDSps\ problem, some vertices are preselected in a potential \SR-dominating set, are all of degree $1$, and are supposed to be satisfied no matter how the constraints from $\si$ are.
For this second step of our reduction, it suffices to show how these vertices can be forced to be selected and satisfied for our problem \eSRDS, while respecting the constraints from $\si$.
Notice that if $\si_{\min} + 1 \in \si$, then this reduction is easy: we link each of these vertices to a clique with $\si_{\min}$ vertices, using the same arguments as for gadget \gadgetname{limitation}~\gadget{L}{} (see the proof of Lemma~\ref{lem:gadget_L}).

\bigskip

Let $(G,W)$ be an instance of \eSRDSps.
We construct a graph $H$ for \eSRDS, such that vertices in $H$ corresponding to the subset $W$ of $V(G)$ are all forced to be selected, and are satisfied with respect to constraints from $\si$.
Notice that the sets $\si$ and $\rh$ are not modified during this reduction.
Moreover, the goal of our reduction is to prove that \eSRDS\ is \W{1}-hard when parameterized by tree-width.
Thus we also have to ensure that the tree-width of the constructed graph $H$ is bounded in the tree-width of the initial graph $G$.

Let $\alpha$ be the smallest integer which is both in $\si$ and $\rh$, that is $\alpha = \min \{q \mid q \in \si \cap \rh\}$, and let $\beta$ be the length of the gap between the first two integers in $\si$, that is $\beta = \min \{p - 1 \mid \si_{\min} + p \in \si\}$.
As $\si$ is infinite and $\rh$ is cofinite, these two integers are well-defined.

We now describe the construction of the graph $H$.
Initially, $H$ is composed of $\alpha$ copies $G_1,\ldots,G_{\alpha}$ of the initial graph $G$.
For each copy $v_i$ in $G_i$ of a preselected vertex $v \in W$ of $G$, with $1 \leq i \leq \alpha$, we add to $H$ a clique $K_i$ with $\si_{\min}$ vertices, and linked to $v_i$.
Finally, for each preselected vertex $v$ of $G$, we add an independent set of $\beta$ vertices in $H$, and link these vertices to all the copies $v_i$ of $v$ (see Figure~\ref{app_fig:second_step_construction}).

\begin{figure}[h]
	\begin{center}
		\includegraphics[height=3.9cm]{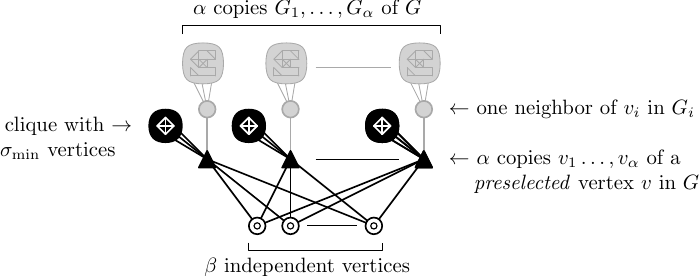}
	\end{center}
	\caption{Overall construction used to reduce from \eSRDSps\ to \eSRDS.}
	\label{app_fig:second_step_construction}
\end{figure}

\newpage

\subsection{Correctness.}

We now prove that the construction is correct, that is:
\begin{itemize}
	\item if $H$ admits a \SR-dominating set $D_H$, then each copy $v_i \in V(H)$ of a preselected vertex $v$ of $G$ must be selected in $D_H$;
	\item $(G,W)$ is a yes-instance for \eSRDSps\ if and only if $H$ is a yes-instance for \eSRDS;
	\item this is an \rfpt-reduction.
\end{itemize}

Notice that the vertices of each clique with $\si_{\min}$ vertices are satisfied, for the same arguments as for gadget \gadgetname{limitation}~\gadget{L}{} (see the proof of Lemma~\ref{lem:gadget_L}).

\begin{lemma}\label{lem:satisfied_vertices}
If $H$ admits a \SR-dominating set $D_H$, then each copy $v_i \in V(H)$ of a preselected vertex $v$ of $G$ must be selected in $D_H$.
\end{lemma}
\begin{proof}
Let $v_i \in V(H)$ be a copy of a preselected vertex $v \in W$ of degree $1$ in $G$, with $1 \leq i \leq \alpha$.
By construction, this vertex is linked to a clique with $\si_{\min}$ vertices.
For the same arguments as for gadget \gadgetname{limitation}~\gadget{L}{} (see the proof of Lemma~\ref{lem:gadget_L}), the vertex $v_i$ must be selected in $D_H$.
\end{proof}

\begin{lemma}
$(G,W)$ is a yes-instance for \eSRDSps\ if and only if $H$ is a yes-instance for \eSRDS, that is the initial graph $G$ admits a \SR-dominating set $D_G$ containing the subset $W \subseteq V(G)$ of preselected vertices if and only if the constructed graph $H$ admits a \SR-dominating set $D_H$.
\end{lemma}
\begin{proof}
Suppose first that $G$ admits a \SR-dominating set $D_G$, containing the subset $W \subseteq V(G)$ of preselected vertices all of degree $1$.

Let us explain how to construct a \SR-dominating set $D_H$ of $H$.
Initially, the set $D_H$ is empty.
For each preselected vertex $v$ in $G$, we add to $D_H$ all its copies $v_1,\ldots,v_{\alpha}$ in $H$.
For each vertex $u \in V(G)$ which is in the \SR-dominating set $D_G$, we add to $D_H$ all its copies $u_1,\ldots,u_{\alpha}$ in $H$, that is $u_1,\ldots,u_{\alpha} \in D_H$, and the independent set of $\beta$ vertices.
For each vertex $u' \in V(G)$ which is not in the \SR-dominating set $D_G$, none of its copies $u'_1,\ldots,u'_{\alpha}$ in $H$ are added to $D_H$, that is $u'_1,\ldots,u'_{\alpha} \notin D_H$, nor are the independent set of $\beta$ vertices.
As $D_G$ is a \SR-dominating set of $G$, and the selected neighborhood in $H$ of each copy $u_i$ is the same, all theses vertices are satisfied.

We now show that $D_H$ is a \SR-dominating set of $H$.
Recall that in $G$, the vertex $v$ from which $v_i$ is a copy is of degree $1$, and hence it as only one neighbor in $G$ which we noted $u$.
Hence the only neighbors of $v_i$ in $H$ are the copy $u_i$ of $u$, the vertices of the clique with $\si_{\min}$ vertices, and the independent set of $\beta$ vertices.
Consider the two possible cases for its neighbor $u_i$.
If $u_i \notin D_H$, then $v_i$ is satisfied if its only neighbors are the vertices in the clique with $\si_{\min}$ vertices, as in this case it has $\si_{\min} \in \si$ selected neighbors in $H$.
If $u_i \in D_H$, then $v_i$ is satisfied if all its neighbors in $H$ are also in $D_H$, that is the clique with $\si_{\min}$ vertices, the independent set of $\beta$ vertices and the vertex $u_i$, as in this case it has $\si_{\min} + \beta + 1 \in \si$ selected neighbors in $H$.
Notice that by construction of $H$, $u_i$ is always satisfied, as $D_G$ is a \SR-dominating set of $G$.
Finally, notice that the $\beta$ independent vertices have exactly $\alpha \in \si \cap \rh$ selected neighbors, which are the vertices $v_1,\ldots,v_{\alpha}$, and hence they are always satisfied no matter if they are selected or not.
Thus the set $D_H$ we have constructed is a \SR-dominating set of $H$.

\medskip

Suppose now that $H$ admits a \SR-dominating set $D_H$.
We show how to construct a \SR-dominating set $D_G$ of $G$, containing the subset $W \subseteq V(G)$ of preselected and satisfied vertices.
By Lemma~\ref{lem:satisfied_vertices}, all the copies $v_i$ in $H$ of a preselected vertex $v \in W$ in $G$ are selected, and as $D_H$ is a \SR-dominating set of $H$, they are all satisfied.
Let $G_i$ be one of the $\alpha$ copies of $G$ in the graph $H$.
Then $D_G = D_H \cap V(D_{G_i})$, that is we consider only the vertices of $D_H$ which are in the copy $G_i$ of $G$.
As $D_H$ is in particular a \SR-dominating set of $G_i$, the set $D_G$ is a \SR-dominating set of $G$ containing the subset $W \subseteq V(G)$ of preselected vertices.
\end{proof}

\begin{lemma}
The reduction is an \rfpt-reduction: the graph $H$ is of size polynomial in the size of $G$, and $\tw(H) \leq (\si_{\min} + \beta) \cdot \tw(G)$.
\end{lemma}
\begin{proof}
Recall that $\si$ and $\rh$ are two fixed sets of integers, which are part of the definition of \eSRDS\ (they are not parameters).
Let $(G,W)$ be the instance of \eSRDSps, and let $H$ be the constructed instance for \eSRDS.

Let $n = |V(G)|$.
The graph $H$ is contains $\alpha$ copies of $G$, to which are added at most $n$ cliques with $\si_{\min}$ vertices each (one clique for each preselected vertex in each copy of $G$).
Finally, at most $n$ independent set of $\beta$ vertices each are added to $H$, one for each set of copies of a preselected vertex.

Thus we have $|V(H)| \leq |V(G)| \cdot \big((\alpha \cdot \si_{\min}) + \beta\big)$.

\bigskip

We now show that the tree-width of $H$ is bounded in the tree-width of $G$.
For this purpose, suppose that $(T_G,\chi_G)$ is a tree-decomposition of $G$, of with $\tw(G)$, and note $(T_H,\chi_H)$ the tree-decomposition of $H$ we will construct.

First, let $T_H = T_G$.
For each bag $X_i \in \chi_G$, we construct the corresponding bag $X'_i \in \chi_H$ by adding to $X'_i$ all the copies $u_i \in V(H)$ of each vertex $u \in V(G)$ which are in the bag $X_i$.
Moreover, if $u \in V(G)$ is a preselected vertex, then we also add to $X'_i$ all the vertices of each clique with $\si_{\min}$ vertices (linked to each copy $u_i \in V(H)$ of the vertex $u \in V(G)$), and all the vertices of the independent set of $\beta$ vertices linked to all these copies in $H$.
It is easy to verify that we obtain a tree-decomposition of $H$, of width $\tw(H) \leq (\si_{\min} + \beta) \cdot \tw(G)$.
\end{proof}

\section{Conclusion}\label{sec:conclusion}

While the restriction to graphs of bounded tree-width is usually used to reach \FPT\ parameterized complexity, we have proven that for (infinitely) many cases of $\si$ and $\rh$ the \eSRDS\ problem becomes \W{1}-hard when parameterized by tree-width, for instance when $\si$ contains arbitrary large gaps, and $\rh$ is cofinite.
Associated with the proof that this problem is \FPT\ parameterized by tree-width when \si\ and \rh\ are ultimately periodic sets , we are getting closer to a complete dichotomy of the parameterized complexity of \eSRDS\ parameterized by tree-width.

This naturally requests further investigations of the parameterized complexity of this problem for other cases of sets \si\ and \rh.
In particular, what is the parameterized complexity of \eSRDS\ when the sets are recursive with bounded gaps between two consecutive elements but not ultimately periodic?
Can we circumvent the technical constraint on \si\ which requires a gap of length at least $t$ to be at distance $\poly(t)$ in \si?
Does the \W{1}-hardness of \eSRDS\ depend on the properties of $\si$, or there exists cases of $\rh$ for which it is also \W{1}-hard?

\subsubsection{Acknowledgement.} The author wishes to thank M. Liedloff, I. Todinca and several referees for constructive remarks on this work.

\small
\bibliographystyle{model1b-num-names}
\bibliography{../../../../../../Bibliography/Biblio}

\end{document}